\newcommand{\mathd}{\mathrm{d}}
\newcommand{\tmem}[1]{{\em #1\/}}
\newcommand{\tmop}[1]{\ensuremath{\operatorname{#1}}}
\newenvironment{enumeratenumeric}{\begin{enumerate}[1.] }{\end{enumerate}}
\begin{document}
\mainmatter

\title{ RepLAB: a computational/numerical approach to representation theory}

\author{Denis Rosset\inst{1} \and Felipe Montealegre-Mora\inst{2} \and Jean-Daniel Bancal\inst{3}}

\authorrunning{Denis Rosset et al.}

\institute{
  Perimeter Institute for Theoretical Physics, Waterloo, Ontario, Canada, N2L 2Y5
  \and
  Institute for Theoretical Physics, Universit{\"a}t zu K{\"o}ln, Cologne 50937 Germany
  \and
  D{\'e}partement de Physique Appliqu{\'e}e, Universit{\'e} de Gen{\`e}ve, 1211 Gen{\`e}ve, Suisse
}

\maketitle

\begin{abstract}
  We present a MATLAB/Octave toolbox to decompose finite dimensionial representations of compact groups.
  Surprisingly, little information about the group and the representation is needed to perform that task.
  We discuss applications to semidefinite programming.
\end{abstract}

Early in the development of quantum formalism, some regarded group theory as a mere nuisance: the label {\tmem{Gruppenpest}} is attributed to Pauli and his natural talent for derision~{\cite{Szanton1992}}.
While the usefulness of group theory in quantum physics is no longer a matter of debate, most uses involve Schur-Weyl duality and a handful of well-understood groups: symmetric and cyclic groups, the Pauli and Clifford groups, and the (special) unitary group.

It turns out that a large variety of groups are present in quantum information computations.
Thousands of families of Bell inequalities~{\cite{Rosset2014a}} have been discovered through the enumeration of local polytope facets~{\cite{Brunner2014}}: most of them have some degree of symmetry~{\cite{Sliwa2003}}.
In the context of communication games, symmetries can encode structural information; for example that only the sum of outputs modulo $d$ matters~{\cite{Collins2002}}.

Even when the underlying group is well understood, its representations can still be difficult to decompose.
For example, a body of literature is dedicated to the decomposition of the partially transposed tensor representations of the unitary group~{\cite{Mozrzymas2014b,Studzinski2013,Mozrzymas2018,Mozrzymas2014a}} corresponding to universal quantum cloning machines~{\cite{Studzinski2014}}.
Systems composed of multiple subsystems will have symmetries corresponding to the composition of basic groups.
Bell scenarios have a symmetry group consisting of relabelings of parties, inputs and outputs, most easily described as a double wreath
product of symmetric groups~{\cite{Renou2017}}.

In the last twenty years, quantum information has seen the rise of numerical methods, especially those based on semidefinite programming (SDP).
Density matrices correspond naturally to semidefinite program variables, and so do a variety of quantum objects~\cite{Rosset2019losr} through the use of the Choi isomorphism~{\cite{Choi1975,Jamiolkowski1972}}: for example, the SDP hierarchies for the separability problem~{\cite{Doherty2002,Doherty2004,Doherty2005,Navascues2009}}.
Approaches based on moments of noncommutative polynomials are widely used to compute a variety of bounds~{\cite{Navascues2008a,Navascues2007,Navascues2012,Navascues2015a,Wolfe2019}} using SDP relaxations.

In other fields, the processing of group representations is currently being automated.
In conformal bootstrap~{\cite{Poland2019}}, a package constructs symmetry-adapted SDPs~{\cite{Go2019}}, using key primitives identified by the authors\footnote{For conformal bootstrap, those primitives are: the list of irreducible representations of the group, explicit representations of the group generators using unitary matrices, the complex conjugation map and the decomposition of tensor products of irreducible representations.}.
We build a similar high-level software to perform symmetry reduction, but without dependencies on other computer algebra systems~\footnote{Our approach is numerical Alternatively, we could use the group character table~\cite{Schneider1990,Conlon1990,Baum1994}. Such methods provide exact answers, but sometimes fail to work in a reasonable amount of time on small groups~\cite{Go2019}. Our software weights currently 8000 lines of mostly standalone code, to be compared with the gigabyte required by full fledged CAS systems such as GAP~\cite{Gap}.}.
In the present work, we will identify the primitives necessary to speed up common computations in quantum information.
We aim to work with as little information as possible from the group; in exchange, we are satisfied with double precision floating point answers.
RepLAB derives from specialized code written by the first author for a specific quantum class of problems~{\cite{Tavakoli2019}}, and has since been rewritten from the ground-up to be used in a variety of contexts.
The latest version is available at \href{https://github.com/replab/replab}{https://github.com/replab/replab}.

This work is structured as follows.
In Section~\ref{Sec:Primitives}, we examine our motivations and identify the key primitive to speed up solving SDPs.
In Section~\ref{Sec:Algorithms}, we implement this key primitive using easier subtasks: sampling from the commutant of a representation and eigendecomposition.
In Section~\ref{Sec:Features}, we summarize other features of RepLAB; open and ongoing questions are discussed in Section~\ref{Sec:Open}.

\section{Motivations}
\label{Sec:Primitives}

A (complex\footnote{Note that quantum information problems often reduce to real semidefinite programs for reasons outside the scope of this paper; and
that the best SDP solvers support only real SDPs~{\cite{Gilbert2017a}}. RepLAB works both with real and complex representations.}) semidefinite program is
described by one of these forms~{\cite{Sturm2002}}.
\begin{equation}
  \begin{array}{cl|lc}
    \text{Primal form\ \ \ } &  &  & \text{Dual form\ \ \ \ }\\
& & & \\
    \begin{split}
     \min_{X = X^{\dag} \in \mathbbm{C}^{n \times n}} & \tmop{tr} \langle C, X \rangle \\
     \text{s.t.\ \ \ \ \ } &  X \succeq 0 \\
     &  \langle A_i, X \rangle = b_i, \ \ \forall\ i
    \end{split}
\ \ \ \ & & &\ \ \ \ \ \ \ \ 
    \begin{split}
    \max_{\vec{y} \in \mathbbm{R}^m}\ & \vec{b}^{\top} \cdot \vec{y} \\
     \text{s.t.\ } &  \chi = C - \sum_i y_i A_i  \succeq 0_{},
    \end{split}
  \end{array}
\end{equation}
where we denote SDP constraint by $\succeq 0$ and the conjugate transpose by $^{\dag}$.
The problem is specified by the Hermitian matrices $C$ and $\{ A_i \}$, and the real vector $\vec{b} \in \mathbbm{R}^m$.
In what follows, we assume that the canonical primal or dual SDP has already been prepared in a form that respects symmetry~\footnote{For example, the NPA hierarchy~{\cite{Navascues2008a,Navascues2007}} is naturally implemented in the dual form above; for pointers towards invariant constructions, see~{\cite{Rosset2018,Tavakoli2019}}.}.

\subsection{Invariant semidefinite programs}

An invariant program in the primal form has a matrix $X$ invariant~{\cite{Gatermann2004}} under a group representation (the same idea applies to $\chi$ in the dual form).
Let~{\cite{Serre1977}} $G$ be a compact group with finite dimensional representation $\rho$ acting on the representation space $V = \mathbbm{C}^n$:
\begin{equation}
  \rho : G \rightarrow \tmop{GL} (V), g \mapsto \rho_g,
\end{equation}
where $\tmop{GL} (V)$ is the group the invertible linear maps $V \rightarrow V$, identified with $n \times n$ invertible matrices.
Then $X$ is invariant under $\rho$ if and only if $X = \rho_g X \rho_g^{\dag}$ for all $g \in G$.
In this work, we assume $\rho$ to be unitary\footnote{Representations of compact groups can be made unitary -- thus in the physics literature, nonunitary representations are used primarily for noncompact groups such as the Lorentz group.
  In numerical optimization, forcing representations to be unitary can destroy sparsity or require a field extension (for example, going from the rationals to algebraic numbers). When the representation is nonunitary, note that $X \rightarrow \rho_g X \rho_g^{-1}$ does {\tmem{not}} preserve Hermitianity, rather $X \rightarrow \rho_g X \rho_g^{\dag}$ does, and its invariant subspace is {\em not} the commutant algebra.}. Then $X = \rho_g X \rho_g^{- 1}$, or
\begin{equation}
  [X, \rho_g] = \rho_g X - X \rho_g = 0,
\end{equation}
and we say that $X$ commutes with $\rho$. The set
\begin{equation}
  C_{\rho} = \{ X \in \mathbbm{C}^{n \times n} : [X, \rho_g] = 0, \forall g \in G \}
\end{equation}
is\footnote{The commutant algebra also contains non-Hermitian matrices.} the
{\em commutant}~{\cite[Sec 1.7]{Sagan2001}} or {\em centralizer algebra} of the algebra generated by $\rho$.
The decomposition of $\rho$ into irreducible subrepresentations corresponds to
\begin{equation}
  V = V^1 \oplus \ldots \oplus V^N, \qquad V^i = W^{i, 1} \oplus \ldots \oplus W^{i, M_i},
\end{equation}
where the {\em isotypic components} $V^i$ contain the invariant subspaces $W^{i, j}$ that cannot be decomposed further; in every $V^i$, the subspaces $\{ W^{i, j} \}_j$ correspond to identical subrepresentations.
Explicitly, there exists a unitary change of basis matrix $U$, $U^{\dag} = U^{- 1}$ such that the following is true for all $g \in G$:
\begin{equation}
  \hat{\rho}_g = U \rho_g U^{- 1} = \left(\begin{array}{ccc}
    \hat{\rho}_g^1 &  & \\
    & \ddots & \\
    &  & \hat{\rho}_g^I
  \end{array}\right), \qquad \hat{\rho}_g^i = \left(\begin{array}{ccc}
    \hat{\rho}_g^{i, 1} &  & \\
    & \ddots & \\
    &  & \hat{\rho}_g^{i, M_i}
  \end{array}\right) = \mathbbm{1}_{M_i} \otimes \hat{\rho}_g^{i, 1},
\end{equation}
as $\hat{\rho}^{i, 1}_g = \hat{\rho}_g^{i, 2} = \ldots = \hat{\rho}^{i,M_i}_g$.
We define $D_i = \dim W^{i, j}$; then $D_i$ is the {\em dimension} of the $i$-th irreducible subrepresentation of $\rho$ and $M_i$ its {\em multiplicity} corresponding to the number of copies.
By Schur's lemma~{\cite{Serre1977}}, the matrix $X$ commuting with $\rho$ has a block diagonal form:
\begin{equation}
  \label{Eq:Xhat} \hat{X} = UXU^{\dag} = \left(\begin{array}{ccc}
    \hat{X}^1 &  & \\
    & \ddots & \\
    &  & \hat{X}^I
  \end{array}\right), \qquad \hat{X}^i = \Xi^i \otimes \mathbbm{1}_{D_i},
\end{equation}
where $\Xi^i$ is a $M_i \times M_i$ Hermitian matrix, then $X \succeq 0$ is equivalent to $\{\Xi^i \succeq 0\}_i$.

\begin{definition}
  The {\em key primitive} to reduce the complexity of solving invariant SDPs is defined as follows: given a description of a group $G$, and the explicit map $\rho$ that describes a representation, we ask for the change of basis matrix $U$, the list of dimensions $\{ D_i \}$ and multiplicities $\{ M_i \}$ of the irreducible representations.
\end{definition}

\subsection{Computational requirements}

Let the SDP constraint have dimension $n \times n$ with blocks of size $n_i$ so that $n = n_1 + \ldots + n_I$; as above let $m$ be the number of dual variables.
We consider now the complexity of the widely used interior point primal-dual methods.

For the time complexity, depending on the structure of the problem, Borchers et al.~{\cite{Borchers2007}} observed the following.
When $m \gg n$, the factoring of the Schur complement matrix dominates in $\mathcal{O} (m^3)$.
Otherwise, the Cholesky factorization and eigenvalue computation of the matrices $X$ and $\chi$ usually dominates, in $\mathcal{O} ((n_1)^3 + \ldots (n_I)^3)$.
When the SDP has been formulated in an invariant form, we are usually in this second case.
Storage-wise, the problem data scales in $\mathcal{O} (mn^2)$ in the worst-case when neither sparsity or block structure are present.
The Schur complement matrix requires $\mathcal{O} (m^2)$ storage, and the matrices $X$ and $\chi$ require storage in $\mathcal{O} ((n_1)^2 + \ldots + (n_I)^2)$.
When using our technique, the block diagonalization of a SDP of size $n \times n$ produces a SDP with blocks of size $n'_i = M_i$.

\section{Algorithm}\label{Sec:Algorithms}

We reduce the decomposition of $\rho$ into irreducible subrepresentations into three subtasks.
\begin{enumeratenumeric}
  \item Sample generic elements from $C_{\rho}$.
  
  \item Compute the eigendecomposition of a Hermitian matrix.
  
  \item Find if two irreducible subrepresentations $\sigma^i$ and $\sigma^j$
  of $\rho$ are equivalent, and if yes, find the matrix $A$ such that
  $\sigma^i_g = A \sigma^j_g A^{- 1}$, and apply it.
\end{enumeratenumeric}

\subsection{Sampling from the commutant algebra}

We first obtain a generic element of the commutant algebra.
Compared to other approaches~{\cite{Ibort2016,Maehara2010,Murota2010}}, we sample first a generic Hermitian matrix $X$ from the Gaussian Unitary Ensemble~{\cite{Anderson2009}}.
This ensures that the distribution is invariant under unitary changes of basis, and provides guarantees on eigenvalue separation.
We then project $X$ on the commutant subspace; as $\rho$ is unitary, we have
\begin{equation}
  \begin{array}{cl|lc}
    \text{{\tmem{Finite group}}} &  &  & \text{{\tmem{Compact group}}}\\
    \overline{X} = \frac{1}{| G |} \sum_{g \in G} \rho_g X \rho_g^{- 1} &  & 
    & \overline{X} = \int_G \rho_g X \rho_g^{- 1} \mathd \mu (g)
  \end{array},
\end{equation}
where $\mu$ is the Haar measure of $G$.
We thus reduced the problem of sampling from $C_{\rho}$ to the problem of projecting on $C_{\rho}$.

For a subset $T$ of $G$, we define the partial averaging operator $\Sigma_{\rho, T} [X] = \sum_{g \in T} \rho_g X \rho_g^{- 1} / | T |$.
In our previous work~\cite{Tavakoli2019}, we computed a decomposition of a finite $G$ into a cartesian product of sets $T_1, \ldots, T_{\nu}$ such that every $g \in G$ has a unique decomposition $g = t_{\nu} \ldots t_1$ with $t_i \in T_i$.
Then:
\begin{equation}
  \label{Eq:RelativeReynolds} \overline{X} = \Sigma_{\rho, T_{\nu}}
  [\Sigma_{\rho, T_{\nu - 1}} [\ldots \Sigma_{\rho, T_1} [X]]] .
\end{equation}
For the symmetric group $S_D$ of order $D!$, this reduces the number of actions of $\rho$
from $\mathcal{O} (D!)$ to $\mathcal{O} (D^2)$ with~\footnote{This uses the decomposition of any permutation into a product
of cycles, each of length 2,3, \ldots $D$.} $\nu = D$.

To generalize the approach to compact groups, we work with an oracle that samples elements from the Haar measure $\mu$.
Given an integer $\nu$, we sample small sets\footnote{For finite groups, see~{\cite{Babai1991}} where different sets $\{ T_i \}$ are used in the approximation.} $T_1, \ldots, T_{\nu}$ from the Haar measure $\mu$, typically with $| T_i | = 3$.
We then define $\overline{X}_{\nu} = \Sigma_{\rho, T_{\nu}} [\Sigma_{\rho, T_{\nu - 1}} [\ldots \Sigma_{\rho, T_1} [X]]]$, and observe that $\overline{X}_{\nu}
\rightarrow \overline{X}$ when $\nu \rightarrow \infty$.

We thus reduced the subtask of sampling from $C_{\rho}$ to the task of sampling from $G$ itself.
RepLAB uses the technique~(\ref{Eq:RelativeReynolds}) when dealing with finite groups; for compact groups, we use the iterated averaging on random samples with $\nu = 1000$ in absence of a better termination criterion (work in progress).

\subsection{Computing the eigendecomposition of a Hermitian matrix}

We compute numerically the eigendecomposition of $\overline{X}$, and obtain a change of basis matrix $U$ such that $\hat{X} = U \overline{X} U^{\dag}$ is diagonal.
Up to reordering of eigenvalues, $\hat{X}$ has the form~(\ref{Eq:Xhat}) with fully diagonal blocks $\Xi^i$.
By assumption of genericity, there are no repeated eigenvalues inside each $\Xi^i$ and across them.
Then, we group equal eigenvalues and denote the basis of the corresponding eigenspaces by the matrices $U^1, \ldots, U^M$, so that
\begin{equation}
  \sigma^i : g \mapsto \sigma_g^i = U^i \rho_g (U^i)^{\dag}
\end{equation}
are irreducible subrepresentations of $\rho$.
The computational cost of this step is $\mathcal{O} (n^3)$ with $n$ the dimension of $\rho$.

\subsection{Grouping equivalent representations}

To group the irreducible subrepresentations identified in the previous step by equivalency, we take another sample $\overline{X}'$ from $C_{\rho}$ and apply the following.

\begin{proposition}
  \label{Prop:Equivalent}Given two basis matrices $U^i$ and $U^j$, let
  \begin{equation}
    \sigma^i : g \mapsto \sigma^i_g = U^i \rho_g (U^i)^{\dag}, \qquad \sigma^j
    : g \mapsto \sigma^j_g = U^j \rho_g (U^j)^{\dag},
  \end{equation}
  be two irreducible subrepresentations of $\rho$. Let $\overline{X}'$ be a
  generic sample from $C_{\rho}$, uncorrelated with $U^i, U^j$. Let $F = U^i
  \overline{X}' (U^j)^{\dag}$. If $F = 0$ then (with probability one) $\sigma^i$ and $\sigma^j$ are
  not equivalent. Otherwise, the two subrepresentations are equivalent,
  $\sigma^i_g = F \sigma_g^j F^{- 1}$, and $\alpha F$ is unitary for some $\alpha \in \mathbbm{C}$ .
\end{proposition}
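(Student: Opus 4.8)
The plan is to show first that $F$ must be an intertwiner from $\sigma^j$ to $\sigma^i$, then to read off the dichotomy from Schur's lemma~\cite{Serre1977}, and finally to use the genericity of $\overline{X}'$ both to rule out the accidental vanishing of $F$ in the equivalent case and to pin down the scalar $\alpha$.

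I would first establish the intertwining identity $\sigma^i_g F = F \sigma^j_g$ for every $g \in G$. Write $P_i = (U^i)^{\dagger} U^i$ and $P_j = (U^j)^{\dagger} U^j$ for the orthogonal projectors onto the eigenspaces of $\overline{X}$ carrying $\sigma^i$ and $\sigma^j$. Being spectral projectors of $\overline{X} \in C_{\rho}$, they commute with every $\rho_g$; together with $U^i (U^i)^{\dagger} = \mathbbm{1}_{D_i}$, $U^j (U^j)^{\dagger} = \mathbbm{1}_{D_j}$ and $[\overline{X}', \rho_g] = 0$, one computes
\begin{align*}
  \sigma^i_g F
  &= U^i \rho_g (U^i)^{\dagger} U^i \overline{X}' (U^j)^{\dagger}
  = U^i \rho_g P_i \overline{X}' (U^j)^{\dagger}
  = U^i \overline{X}' \rho_g (U^j)^{\dagger} \\
  &= U^i \overline{X}' P_j \rho_g (U^j)^{\dagger}
  = U^i \overline{X}' (U^j)^{\dagger} \sigma^j_g
  = F \sigma^j_g .
\end{align*}
Since $\sigma^i$ and $\sigma^j$ are irreducible, Schur's lemma~\cite{Serre1977} now applies to the intertwiner space $\{F : \sigma^i_g F = F\sigma^j_g\ \forall g\}$: it is $\{0\}$ when $\sigma^i \not\cong \sigma^j$, and otherwise one-dimensional, consisting of the zero matrix together with invertible matrices. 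In particular, $F \neq 0$ already forces $\sigma^i \cong \sigma^j$ and $\sigma^i_g = F\sigma^j_g F^{-1}$, which is one half of the statement.

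For the converse half I would show that, when $\sigma^i \cong \sigma^j$, one has $F = 0$ only with probability zero. Consider the linear map $L \colon X \mapsto U^i X (U^j)^{\dagger}$ on the real vector space of Hermitian elements of $C_{\rho}$. By the block form~(\ref{Eq:Xhat}), in the equivalent case $W^i$ and $W^j$ are two distinct copies $W^{k,a}, W^{k,b}$ ($a\neq b$) of the same irreducible within one isotypic component, and $C_{\rho}$ contains the Hermitian element whose $V^k$-block is $(|e_a\rangle\langle e_b| + |e_b\rangle\langle e_a|)\otimes\mathbbm{1}_{D_k}$ (and zero elsewhere); $L$ sends it to $\mathbbm{1}_{D_k}\neq 0$, so $L\not\equiv 0$ and $\ker L$ is a proper subspace. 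The sample $\overline{X}'$ is a Gaussian Unitary Ensemble matrix projected onto $C_{\rho}$, so conditionally on $U^i,U^j$ its law is a nondegenerate Gaussian on the Hermitian part of $C_{\rho}$, in particular absolutely continuous; hence $F = L(\overline{X}')$ lands in $\ker L$ with probability zero, giving ``$F = 0 \Rightarrow$ (with probability one) $\sigma^i \not\cong \sigma^j$''.

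It remains to obtain unitarity up to a scalar. When $\sigma^i\cong\sigma^j$, the matrix $F$ is invertible by the Schur step, and $\sigma^i_g, \sigma^j_g$ are unitary (they are $\rho_g$ restricted to $\rho$-invariant subspaces, expressed in orthonormal bases). From $F\sigma^j_g = \sigma^i_g F$ we get $(\sigma^j_g)^{\dagger}(F^{\dagger}F)\sigma^j_g = F^{\dagger}(\sigma^i_g)^{\dagger}\sigma^i_g F = F^{\dagger}F$, so $F^{\dagger}F$ commutes with the irreducible $\sigma^j$; by Schur's lemma $F^{\dagger}F = \lambda\,\mathbbm{1}_{D_j}$, with $\lambda > 0$ since $F$ is invertible, and then $\alpha F$ is unitary for $\alpha = \lambda^{-1/2}$. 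The main obstacle is the probabilistic third step: it requires both that $L$ does not vanish identically in the equivalent case --- which is precisely what the block structure~(\ref{Eq:Xhat}) provides --- and that the law of $\overline{X}'$ charges no proper linear subspace of the Hermitian commutant, which is guaranteed by drawing from the Gaussian Unitary Ensemble before averaging.
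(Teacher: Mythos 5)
Your proof is correct and follows the same overall strategy as the paper: establish the intertwining relation $\sigma^i_g F = F\sigma^j_g$ via the projector $(U^i)^{\dagger}U^i$ and the commutation of $\overline{X}'$ with $\rho$, then invoke Schur's lemma and genericity. Two of your sub-steps are handled differently, both to your advantage in self-containedness: for the final claim the paper cites the existence of a unitary intertwiner $A$ and deduces $F=\alpha A$, whereas you derive $F^{\dagger}F=\lambda\mathbbm{1}$ directly from Schur's lemma applied to the self-intertwiner $F^{\dagger}F$, which proves (rather than assumes) that a unitary intertwiner exists; and where the paper disposes of the genericity issue in one sentence, you make it precise by exhibiting a nonzero off-diagonal Hermitian element of the commutant on which the linear map $X\mapsto U^iX(U^j)^{\dagger}$ is nonzero, and noting that the projected GUE law is absolutely continuous on the Hermitian part of $C_{\rho}$, so it charges no proper subspace. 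Both additions are sound and fill in details the paper leaves implicit.
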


\begin{proof}
  Noting that $(U^i)^{\dag} U^i$ is a projector on the corresponding invariant subspace, we verify that $F$ is an equivariant map: $\sigma^i_g F = F \sigma^j_g$ for all $g \in G$.
  We now use Schur's lemma~{\cite[Prop. 4]{Serre1977}}.
  By the assumption of genericity, $F = 0$ happens only when $F$ has to be zero, and $\sigma^i$ is inequivalent to $\sigma^j$.
  Otherwise, there is a unitary~{\cite{Mozrzymas2014}} matrix $A$ such that $\sigma^i_g = A \sigma^j_g A^{- 1}$, and thus $\sigma_g^j  (A^{- 1} F) = (A^{- 1} F) \sigma_g^j$.
  By Schur's lemma $A^{- 1} F = \alpha \mathbbm{1}$, and we have necessarily $F = \alpha A$ with $\alpha \in \mathbbm{C}$.
\end{proof}

Checking every pair $(i, j)$ with $i > j$, we group the bases $U^i$ into isotypic components, and make sure that equivalent irreducible
subrepresentations are all expressed in the same basis.
From this grouping, we compute the dimensions $\{ D_i \}$ and corresponding multiplicities $\{ M_i \}$.

The ideas presented above can be adapted to provide new primitives in the software.
Mozrzymas et al.~\cite{Mozrzymas2014} considered the problem of deciding whether two arbitrary irreducible representations of $G$, $\sigma^1: G \to \tmop{GL}(\mathbbm{C}^{n_1})$ and $\sigma^2: G \to \tmop{GL}(\mathbb{C}^{n_2})$ are equivalent, and computing the change of basis matrix.
We apply Proposition~\ref{Prop:Equivalent}, constructing a new representation $\rho: G \to \tmop{GL}(\mathbbm{C}^{n_1+n_2})$ such that $\rho_g = \sigma^1_g \oplus \sigma^2_g$, providing an algorithm that generalizes to compact groups.
We leave as an open question the construction of other primitives.

\subsection{Usage in practice}

The algorithm above only require a way to sample from the group and the representation image function.
This is appropriate to decompose, for example, tensor products of the defining representation of the unitary group: RepLAB provides a way to sample from the unitary group, which is confounded with the matrix $u$ of the defining representation, and then the image function is simply $u \mapsto u \otimes \ldots \otimes u$.
For representations of finite groups, the user can define a permutation group using its generators, and then the representation by the images of those generators.
No additional information is required.

\section{Other features of RepLAB}\label{Sec:Features}

RepLAB supports the decomposition of representations over both $\mathbbm{R}$ and $\mathbbm{C}$.
The best solvers available at the date of this writing do not support optimization over Hermitian matrices, in which case a wasteful scheme is used~{\cite{Gilbert2017a}}, losing some of the gains of symmetry reduction.
Real commutant algebras have a more complex structure; they includes three types of irreducible representations~\cite{Maehara2010}.
We found efficient and simple methods to address this challenge\footnote{In contrast, Maehara et al.~\cite{Maehara2010} prescribe the use of a few exotic matrix decomposition techniques, some of which do not have open source implementations available.}.

For basic objects such as permutations and tuples, RepLAB reuses primitive MATLAB types; for example permutations are represented as row vectors of images.
For further accessibility, RepLAB is compatible with the open source clone GNU Octave~{\cite{Rothlisberger2002}}, and that compatibility is tested at each release.

RepLAB contains heuristics to preserve sparsity, and recover rational basis matrices in some cases.
It also contains out-of-the-box support for permutation groups, groups of signed permutations relevant for the study of correlation Bell inequalities, and the unitary group; those groups can be combined using standard group constructions such as direct products,
semidirect products and wreath products relevant for quantum information scenarios~{\cite{Renou2017}}; and representations of the factor groups can be combined to create representations of the product.

RepLAB extends the toolbox YALMIP~{\cite{Lofberg2004}} by providing symmetry-satisfying SDP variables.
It can also impose symmetry on existing SDP variables, easing the construction of invariant optimization problems.
It can block diagonalize SDP data provided in an extension of the SeDuMi MAT format~{\cite{Sturm1999}}.

\section{Open questions}\label{Sec:Open}

A main open question is to extend the applicability of RepLAB beyond block-diagonalizing semidefinite programs: what other operations can be based on a sampling oracle?
Other questions or challenges are discussed below. 

\subsection{Scaling}

How do the algorithms employed in the three subtasks of Section~\ref{Sec:Algorithms} scale?
Can we do better?
The first subtask requires $\mathcal{O} (n^2)$ storage and its time complexity is currently unknown.
The second subtask has currently time complexity $\mathcal{O}(n^3)$ and a running time similar to the eigendecomposition step in a single solver iteration working on the original SDP.
By replacing the standard eigendecomposition algorithm by the Lanczos algorithm~(see {\cite{Maslen2003}}), we could reduce the time complexity to $\mathcal{O} (Nn^2 + N^2)$, where $N$ is the number of irreducible subrepresentations.
Are better asymptotics available for representations of compact groups?

Another route is to use the black-box algorithms above in the last resort, and
exploit partial information about the structure of the groups/representations.
As discussed above, RepLAB has built in support for standard textbook group and representation constructions, but is not exploiting that
structure at the moment.
Nevertheless, RepLAB has hooks in parts of its internals so that specialized methods can be registered (see~{\cite{Breuer1998}} for a similar system).

\subsection{Precision}

Currently, RepLAB works in hardware double floating point precision.
RepLAB contains several magic $\varepsilon$ thresholds appropriate for that precision and problems with $n \lessapprox 10000$.
We are currently working on analyzing the numerical robustness of our algorithm and will replace them by proper error bounds (in the same spirit as the bounds in~{\cite{Babai1991}}).
Then in principle, RepLAB could provide approximate subrepresentations bases with arbitrary precision.

The current version of RepLAB performs limited exact solution recovery in the case of rational representations~{\cite{Plesken1996}}.
It is known that irreducible representations of the symmetric group can always be realized over $\mathbbm{Q}$; we observed numerically that tensor products of representations of the unitary group, possibly partially transposed, decompose with change of basis in $\mathbbm{Q}$.
Is there a general principle at play there?
For finite groups, it is known that small extensions of the rationals are necessary, with complexity proportional to the exponent of the group.
When and how can exact results be recovered?

Even assuming that only approximative solutions are available, another possibility is to incorporate this source of error in the overall error analysis.
After all, solvers return slightly infeasible solutions: the primal-dual gap cannot be blindly trusted.
By considering the pair RepLAB-solver as a single black box, we can use standard certification techniques such as verified semidefinite programming~{\cite{Jansson2006}}.

\section{Conclusion}

We presented a toolbox to numerically decompose arbitrary finite dimensional representation of compact groups.
Surprisingly, the user needs to provide only little information about the group and its representation, and there is no need to compute much structure to accomplish our task.
We foresee RepLAB having impact in two ways.
The first one is reducing the computational cost of solving SDPs, thus expanding the applicability of a wide variety of quantum information methods.
The second one is pedagogical: by delegating all computations to the software, a hands-on approach to representation theory can be taught, focusing on the physics by working on concrete examples right at the start.
Along the same line, RepLAB can be used to quickly check whether an algebraic analysis of the symmetries of a problem is worthwhile: while the bases returned are approximate, the dimensions and multiplicities of the irreducible subrepresentations are themselves not approximate.

\paragraph*{Acknowledgments. ---}We acknowledge useful discussions with David Gross, Elie Wolfe and Markus Heinrich.
This research was supported by Perimeter Institute for Theoretical Physics.
Research at Perimeter Institute is supported in part by the Government of Canada through the Department of Innovation, Science and Economic Development Canada and by the Province of Ontario through the Ministry of Economic Development, Job Creation and Trade.
This publication was made possible through the support of a grant from the John Templeton Foundation.
The opinions expressed in this publication are those of the authors and do not necessarily reflect the views of the John Templeton Foundation.
FMM was funded by the DFG project number 4334.

\bibliographystyle{spphys}
\bibliography{rosset}

\end{document}